\documentclass[conference]{IEEEtran}
\IEEEoverridecommandlockouts
\usepackage{cite}
\usepackage{epstopdf}
\usepackage{amsmath,amssymb,amsthm,mathrsfs,amsfonts,dsfont}
\usepackage{algorithmic}
\usepackage{graphicx}
\usepackage{graphics}
\usepackage{color}
\usepackage{textcomp}
\usepackage{balance}
\usepackage{tikz}
\usepackage[top=0.75in,bottom=1in,left=0.675in,right=0.675in]{geometry}
\newcommand{\site}{\boldsymbol{s}}

\newtheorem{prop}{Proposition}
\def\BibTeX{{\rm B\kern-.05em{\sc i\kern-.025em b}\kern-.08em
		T\kern-.1667em\lower.7ex\hbox{E}\kern-.125emX}}
\begin{document}
	
	\title{How To Dimension Radio Resources When Users Are Distributed on Roads Modeled by Poisson Line Process}

	\author{\IEEEauthorblockN{Jalal Rachad$~^{*+}$, Ridha Nasri$~^*$, Laurent Decreusefond$~^+$}
		\IEEEauthorblockA{$~^*$Orange Labs: Direction of Green transformation, Data knowledge, traffic and resources Modelling,\\ 40-48 avenue de la Republique
			92320 Chatillon, France\\ 
			$~^+$LTCI, Telecom ParisTech, Universite Paris-Saclay,\\
			 75013, Paris, France\\
			Email:\{$^{*}$jalal.rachad,  $~^*$ridha.nasri\}@orange.com, $^+$Laurent.Decreusefond@mines-telecom.fr}}

%
%
%
	\maketitle
	\begin{abstract}
		Resources dimensioning aims at finding the number of radio resources required to carry a forecast data traffic at a target users Quality of Services (QoS). The present paper attempts to provide a new approach of radio resources dimensioning considering the congestion probability, qualified as a relevant metric for QoS evaluation. Users are assumed to be distributed according to a linear Poisson Point Process (PPP) in a random system of roads modeled by Poisson Line Process (PLP) instead of the widely-used spatial PPP. We derive the analytical expression of the congestion probability for analyzing its behavior as a function of network parameters. Finally we show how to dimension radio resources by setting a value of the congestion probability, often targeted by the operator, in order to find the relation between the necessary resources and the forecast data traffic expressed in terms of cell throughput. Different numerical results are presented to justify this dimensioning approach.
		 
	\end{abstract}
	
	\begin{IEEEkeywords}
	 OFDM, Dimensioning, Physical Resource Block, Congestion probability, Poisson Line Process. 
	\end{IEEEkeywords}
	
	\section{Introduction}

    Radio resources dimensioning is a major task in cellular network development that aims at maximizing the network resources efficiency. Dimensioning process consists in assessing the necessary resources that permit to carry a predicted data traffic in order to satisfy a given QoS. Therefore, an accurate traffic estimation is one of the major issues to consider during dimensioning process. Also, it is of utmost importance to choose a relevant QoS metric to evaluate system performances.
    
    On the other hand, 5G-NR (5th Generation-New Radio) interface inherits many concepts and techniques from 4G systems. It comes with the scalable OFDM (Orthogonal Frequency Division Multiplex) technology having different subcarriers' spacing ($\Delta f=2^{\nu}15$ kHz, where $\nu=0$ to $4$) and diverse spectrum bands \cite{3GPPTS38214}. A set of OFDM subcarriers constitutes the basic unit of radio resources that a cell can allocate to a mobile user. This set of subcarriers is called Physical Resource Block (PRB) in the 3rd Generation Partnership Project (3GPP) terminology. Moreover, the allocation of PRBs to users is performed at each Time Transmit Interval (TTI) according to a predefined scheduling algorithm. The choose of this latter is mainly related to the fairness level between users, i.e., the way that resources are allocated to users according to their channel qualities and their priorities, defined by the operator \cite{trivedi2014comparison}.\\
    

    
    The available scientific literature of OFDMA based systems' dimensioning is quite rich and many aspects have been discussed; see for instance \cite{agarwal2008low, khattab2006opportunistic, decreusefond2012robust, shen2005adaptive}. It was provided in \cite{decreusefond2012robust} an analytical model for dimensioning OFDMA based networks with proportional fairness in resource allocation between users requiring different transmission rates. For a Poisson distribution of mobile users, \cite{decreusefond2012robust} showed that the required number of resources in a typical cell follows a compound Poisson distribution. In addition, an upper bound of the outage probability was given. In \cite{shen2005adaptive}, an adaptive resource allocation for multi-user OFDM system, with a set of proportional fairness constraints guaranteeing the required data rate, has been discussed.
    
    Furthermore, the network geometry and the spatial distribution of users are important factors to consider when it comes to system performance analysis and dimensioning problem. Different models for network geometry and user distributions can be found in \cite{nasri2016analytical, andrews2011tractable, chetlur2018coverage, choi2018analytical}. In particular, authors in \cite{chetlur2018coverage} and \cite{choi2018analytical} considered vehicular-type communication systems where the transmitting and receiving nodes are distributed along roads, modeled by a PLP. It seems that PLP is a relevant model for roads in urban environment that is gaining popularity recently  and merits investigations when looking for performance analysis and dimensioning process of wireless cellular communications.\\


  Compared to the existing works, the main contribution of this paper is to provide  an analytical model to dimension OFDM based systems with a proportional fair resources' allocation policy. This dimensioning model is very useful for operators because it gives a vision on how they should manage the available spectrum. If the dimensioned number of resources exceeds the available one, the operator can, for instance,  aggregate fragmented spectrum resources into a single wider band in order to increase the available PRBs, or activate capacity improvement features like dual connectivity between 5G and legacy 4G networks, in order to delay investment on the acquisition of new spectrum bands. Moreover, the proposed model can be applied to the scalable OFDM based 5G NR with different subcarriers' spacing in order to enable different types of deployments and network topologies and support different use cases. Additionally, We show that the total number of the requested PRBs follows a compound Poisson distribution and we derive the explicit formula of the congestion probability, that remains valid with every stochastic process describing the random tessellations of roads, as a function of different system parameters. This metric is defined as the risk that the requested resources exceed the available ones. It is often considered primordial for operators when it comes to resources dimensioning since it is related to the guaranteed quality of service.\\

    The rest of this paper is organized as follows: In Section II, system models, including a short description of Poisson Line Process, are provided. Section III characterizes the proposed dimensioning model and provides an explicit expression of the congestion probability and an implicit relation between the number of required resources and the cell throughput. Numerical results are provided in Section IV. Section V concludes the paper. 
    
    \section{System model and notations}
    
  Cellular networks modeling is often related to the network geometry, the shape of the cell, the association between cells and users and of course their spatial distribution. This latter is related to the geometry of the city where the studied cell area exists. The geometry of the city, in turn, is linked to the spatial distribution of roads and buildings. Indoor users, which are distributed in buildings, are often modeled by a spatial PPP in $\mathbb{R}^2$. However, outdoor users are always distributed along roads. To model the spatial distribution of roads, many models have been proposed in literature such as Manhattan model that uses a grid of horizontal and vertical streets \cite{bai2003important}, Poisson Voronoi Tessellations (PVT) \cite{baccelli2001coverage} and PLP that characterize the spatial random tessellations of roads. PVT does not fit the geometry of urban environments and could not lead to analytical results. So we choose to model, in this work, the random tessellations of roads by the PLP which is somehow realistic enough and surely tractable.\\

     
    \subsection{Poisson Line Process} 
    A PPP in $\mathbb{R}^2$ with intensity $\zeta$ is a point process that satisfies: \textit{i}) the number of points inside every bounded closed set $B \in \mathbb{R}^2$ follows a Poisson distribution with mean $\zeta|B|$, where $|B|$ is the Lebesgue measure on $\mathbb{R}^2$; \textit{ii}) the number of points inside any disjoint sets of $\mathbb{R}^2$ are independent \cite{nasri2015tractable}.
    
    The PLP is mathematically derived from the spatial PPP. Instead of points, the PLP is a random process of lines distributed in the plane $\mathbb{R}^2$. Each line in $\mathbb{R}^2$ is parametrized in terms of polar coordinates ($r$,$\theta$) obtained from the orthogonal projection of the origin on that line, with $r$ $\in$ $\mathbb{R^+}$ and $\theta \in (-\pi, \pi]$. Now we can consider an application $T$ that maps each line to a unique couple ($r$,$\theta$), generated by a PPP in the half-cylinder $\mathbb{R^+} \times (-\pi,\pi]$. The distribution of lines in $\mathbb{R}^2$ is the same as points' distribution in this half-cylinder; see \cite{chetlur2018coverage} and \cite{choi2018analytical} for more details.\\ 
    
     In the sequel, we assume that roads are modeled by a PLP $\phi$ with roads' intensity denoted by $\lambda$. The number of roads that lie inside a disk $\site$ of radius $R$ is a Poisson random variable, denoted by $Y$. It corresponds to the number of points of the equivalent PPP in the half-cylinder $[0,R] \times (-\pi,\pi]$ having an area of $2\pi R$. Hence, the expected number of roads that lie inside $\site$ is $\mathbb{E}(Y)=2 \pi \lambda R$. Moreover, users are assumed to be distributed on each road according to independents linear PPP with the same intensity $\delta$. This model is known as Cox point process. The mean number of users on a given road $j$ is $\delta L_j$, with $L_j$ is the length of road $j$. Besides, the number of roads that lie between two disks of radius $R_1$ and $R_2$ respectively, with $R_1 \leqslant R_2$, is $2\pi \lambda (R_2-R_1)$. Also, the number of distributed users in a road, parametrized by ($r$,$\theta$) and delimited by the two disks, is $2\delta (\sqrt{R_2^2 - r^2}-\sqrt{R_1^2 - r^2})$.
     
     Furthermore, the average number of users in the disk of radius $R$ is calculated using the equivalent homogeneous PPP with intensity $\lambda \delta$ in the disk area. Let $u$ denotes the average number of users inside the disk $\site$, it is written by 
     
     \begin{equation}
     u= \lambda \delta \pi R^2.
     \label{nbrusers}
     \end{equation}
     
     For illustration, Fig. \ref{plp} shows a realization of the described Process.  
        
        \begin{figure}[tb]
        	\centering
        	
        	\includegraphics[width=0.40\textwidth]{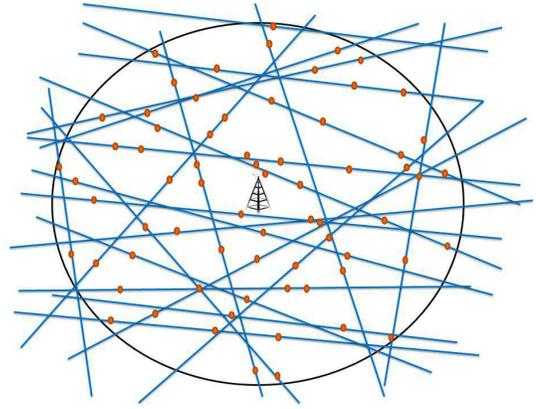}
        	\caption{A realization of Cox Point Process driven by PLP.}
        	\label{plp}
        \end{figure}  
     
%
    
    \subsection{Network model}
    We consider a circular cell $\site$ of radius $R$ with a Base Station (BS), denoted also $\site$ and positioned at its center, transmitting with a power level $P$. The received power by a user located at distance $x$ from $\site$ is $Px^{-2b}/a$, where $2b$ is the path loss exponent and $a$ is the propagation constant. We assume that BS $\site$ allocates PRBs to its users at every TTI (e.g., 1 ms). Each PRB has a bandwidth denoted by $W$ (e.g., $W=$180kHz for scalable OFDM with subcarriers spacing of 15kHz). Active users in the cell compete to have access to the available dimensioned PRBs. Their number is denoted by $M$. The BS allocates a given number $n$ of PRBs to a given user depending on: \textit{i}) the class of services he belongs to (i.e.,the transmission rate he requires) and \textit{ii}) his position in the cell (i.e., the perceived radio conditions). Without loss of generality, we assume that there is just one class of service with a required transmission rate denoted by $C^*$.
    
    A user located at distance $x$ from $\site$ decodes the signal only if the metric \textquotedblleft Signal to Interference plus Noise Ratio (SINR)\textquotedblright $~\Theta(x)= \frac{P x^{-2b}/a}{I+\sigma^2}$ is above a threshold $\Theta^*=\Theta(R)$, where $I$ is the received co-channel interferences and $\sigma^2$ is the thermal noise power. For performance analysis purpose, the SINR $\Theta(x)$ is often mapped to the user throughput by a link level curve. For simplicity of calculation, we use hereafter the upper bound of the well known Shannon's formula for MIMO system $Tx \times Rx $, with $Tx$ and $Rx$ are respectively the number of transmit and receive antennas. Hence, the throughput of a user located at distance $x$ from $\site$ is   
    
     \begin{equation}
    C(x)=\vartheta W\log_2\left(1+\Theta(x)\right), 
    \label{Shannonmimo}
    \end{equation}
    with $\vartheta= \min(Tx,Rx)$.\\
    
     Then, the number of PRBs required by a user located at distance $x$ from $\site$ is
   \begin{equation}
   n(x)=\lceil \frac{C^*}{C(x)} \rceil \leq N,
   \label{nbrprb}
   \end{equation}
   where
   
  \begin{equation}
  N= \min(N_{max} , \lceil C^*/(\vartheta Wlog_2(1+\Theta^*))\rceil),
  \end{equation} 
  with $N_{max}$ is the maximum number of PRBs that a BS can allocate to a user (fixed by the operator) and $\lceil . \rceil$ stands for the Ceiling function.
  
  It is obvious from (\ref{nbrprb}) that users are fairly scheduled because a user with bad radio condition (i.e., with low value of $C(x)$) gets higher number of PRBs to achieve its transmission rate $C^*$.
  
   Let $d_n$ be the distance from $\site$ that verifies, for all $x \in (d_{n-1},d_n]$, $n(x)=n$ with
    
    \begin{equation}
    n= \frac{C^*}{C(d_n)}  
    \label{nbrprb2}
    \end{equation}
   is an integer and\\ 
   \[
   d_n=\left\{
   \begin{array}{ll}
    0 \ \mbox{if $n=0$,}  \\

      \\
   \left[\frac{a(I+\sigma^2)}{P}(2^{\frac{C^*}{n \vartheta W}}-1)\right]^{\frac{-1}{2 b}}\ \mbox{otherwise,}\\
   \\
   \end{array}
   \right.
   \]

   From (\ref{nbrprb2}), the cell area $\site$ can be divided into rings with radius $d_n$ such that for $1\leqslant n\leqslant N,~  0\leqslant d_{n-1}<d_n\leqslant R$. The area between the ring of radius $d_n$ and the ring of radius $d_{n-1}$ characterizes the region of the cell where users require $n$ PRBs to achieve the transmission rate $C^*$. We define the cell throughput by the sum over all transmission rates of users:    
    \begin{equation}
    \tau= u C^*,
    \label{cellth}
    \end{equation}
    with $u$ is recalled the average number of users inside $\site$ and expressed by (\ref{nbrusers}).\\

 On the other hand, inter-cell interference is one of the main factors that compromise cellular network performances. The analysis of this factor level go through the SINR evaluation that depends on the geometry of the network as well as the distribution of user locations. To model interference impact, we use in this paper the notion of interference margin (IM) or Noise Rise which is always used in link budget. IM is defined as the increase in the thermal noise level caused by other-cell interference. IM can be expressed in the linear scale as
 
 \begin{equation}
 IM=\frac{I+\sigma^2}{\sigma^2}
 \label{NR}
 \end{equation}

   \section{Presentation of the dimensioning approach}
   Dimensioning process consists in evaluating the required radio resources that allow to carry a forecast data traffic given a target QoS. The QoS can be measured by the congestion probability metric or even by a target average user throughput. The present approach assesses the congestion probability as a function of many key parameters, in particular the number of PRBs $M$ and the cell throughput $\tau$. To characterize this congestion probability, we need to evaluate the total requested PRBs by all users. In the remainder of this section, we will state some analytical results regarding the explicit expression of the congestion probability under the system model presented in the previous section.
     
   \subsection{Qualification of the number of total requested PRBs}
    
    As we have mentioned in section II.A,  users are distributed on each road $L_j$ according to a linear PPP of intensity $\delta$. Now, if we consider a disk B(0,$d_n$) of radius $d_n$, the number of users in the portion of $L_j$ that lies inside B(0,$d_n$) is a Poisson random variable (this comes from the definition of linear PPP) with mean $2 \delta \sqrt{d_n^2-r_j^2}$ (Pythagoras' theorem). Hence, conditionally on the PLP, the mean number of users $\alpha_n$ inside B(0,$d_n$) is the sum over all roads $L_j$ that intersect with B(0,$d_n$). It can be expressed by    
    
    \begin{equation}
    \alpha_{n}=2 \delta \sum_{j=1}^{Y}\mathds{1}_{(d_n>r_j)}\sqrt{d_n^2-r_j^2}.
    \label{alphan}
    \end{equation}
    
    Moreover, the number of users in the portion of $L_j$ that lies between two rings B(0,$d_n$) and B(0,$d_{n-1}$) is also a Poisson random variable with parameter (i.e, the mean number of users)  $2\delta (\sqrt{d_n^2 - r_j^2}-\sqrt{d_{n-1}^2 - r_j^2})$. Finally, the mean number of users $\mu_n$ in all roads that lie between the rings B(0,$d_n$) and B(0,$d_{n-1}$) can be expressed by 
    \begin{equation}
    \mu_n=\alpha_n - \alpha_{n-1}.
    \label{mun}
    \end{equation}
    
    To qualify the number of requested PRBs by users, we consider a Poisson random variable denoted by $X_n$  with parameter $\mu_n$. $X_n$ represents the number of users that request n PRBs with $1\leq n\leq N$. Hence, we can define the total number of requested PRBs in the cell as the sum of demanded PRBs in each ring. It can be expressed by
    
    \begin{equation}
    \Gamma=  \sum_{n=1}^{N} nX_n. 
    \label{prbtot}
    \end{equation}
    
    The random variable $\Gamma$ is the sum of weighted Poisson variables and its distribution is known as compound Poisson distribution. The evaluation of this distribution requires extensive numerical simulation. However, we fortunately derive here an analytical formula.

%
%
    
    
    \subsection{Congestion probability and dimensioning}
    The congestion probability, denoted by $\Pi$, is defined as the probability that the number of total requested PRBs in the cell is greater than the available PRBs fixed by the operator. In other words, it measures the probability of failing to achieve an output number of PRBs $M$ required to guarantee a predefined quality of services:
    
    \begin{equation}
   \Pi(M,\tau)= \mathbb{P}(\Gamma \geq M).
    \label{poutdef}
    \end{equation}
    
    The following proposition gives the explicit expression of the congestion probability conditionally on the PLP.
    \begin{prop} 
    Let $\Gamma$ be defined as in (\ref{prbtot}) and $\phi$ be a PLP defined as in section II.A, the probability that $\Gamma$ exceeds a threshold $M$ conditionally on $\phi$ is
      {
      	\begin{align}
      	\mathbb{P}(\Gamma\geq M|\phi)&= 1- \frac{1}{\pi} e^{-\delta ~\alpha_{N}} \times\nonumber\\
      	& \int_{0}^{\pi} e^{p_n(\theta)} \frac{\sin(\frac{M\theta}{2})}{\sin(\frac{\theta}{2})}\cos(\frac{M-1}{2}-q_n(\theta)) d\theta,
      	\label{theee}
      	\end{align}
      }
    where
     \[
    p_n(\theta)=\sum_{n=1}^{N}\mu_n \cos(n\theta)\text{ and } \\
    q_n(\theta)=\sum_{n=1}^{N}\mu_n \sin(n\theta).\\
    \]
    \label{theorem1}
    \end{prop}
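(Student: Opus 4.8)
The plan is to exploit that, conditionally on $\phi$, the $\mu_n$ are deterministic constants and the variables $X_n$ are independent with $X_n\sim\mathrm{Poisson}(\mu_n)$, so that the law of the finite weighted sum $\Gamma$ is entirely encoded in its probability generating function, which I will then invert on the unit circle. Since each $nX_n$ has generating function $\mathbb{E}[z^{nX_n}\mid\phi]=\exp(\mu_n(z^n-1))$, multiplying over $n$ gives
\[
\mathbb{E}[z^{\Gamma}\mid\phi]=\exp\Big(\sum_{n=1}^{N}\mu_n(z^n-1)\Big).
\]
Using the telescoping identity $\sum_{n=1}^{N}\mu_n=\alpha_N$ (because $\mu_n=\alpha_n-\alpha_{n-1}$ and $\alpha_0=0$) and evaluating at $z=e^{i\theta}$, the characteristic function factorizes as $\mathbb{E}[e^{i\theta\Gamma}\mid\phi]=e^{-\alpha_N}\,e^{p_n(\theta)}\,e^{iq_n(\theta)}$, where $p_n$ and $q_n$ are exactly the real and imaginary parts of $\sum_{n}\mu_n e^{in\theta}$ appearing in the statement.

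Next, because $\Gamma$ is $\mathbb{N}$-valued, each point mass is recovered by Fourier inversion over one period, $\mathbb{P}(\Gamma=k\mid\phi)=\frac{1}{2\pi}\int_{-\pi}^{\pi}\mathbb{E}[e^{i\theta\Gamma}\mid\phi]\,e^{-ik\theta}\,d\theta$. I would then write $\mathbb{P}(\Gamma\geq M\mid\phi)=1-\sum_{k=0}^{M-1}\mathbb{P}(\Gamma=k\mid\phi)$, interchange the finite sum with the integral, and evaluate the geometric series
\[
\sum_{k=0}^{M-1}e^{-ik\theta}=e^{-i\frac{(M-1)\theta}{2}}\,\frac{\sin(M\theta/2)}{\sin(\theta/2)},
\]
which is precisely the Dirichlet-type kernel $\sin(M\theta/2)/\sin(\theta/2)$ occurring in (\ref{theee}).

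Collecting the exponentials, the integrand becomes $e^{-\alpha_N}e^{p_n(\theta)}\exp\!\big(i(q_n(\theta)-\tfrac{(M-1)\theta}{2})\big)\,\sin(M\theta/2)/\sin(\theta/2)$. Since the probability on the left is real, only the real part contributes, producing the factor $\cos(\tfrac{(M-1)\theta}{2}-q_n(\theta))$. To reduce the range from $[-\pi,\pi]$ to $[0,\pi]$ and obtain the prefactor $1/\pi$, I would invoke parity: $p_n$ and the kernel are even in $\theta$ whereas $q_n$ is odd, so the real part of the integrand is even (and its imaginary part odd, hence integrating to zero, which also confirms the reality of the answer). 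Halving the symmetric integral and substituting back then yields (\ref{theee}).

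The step I expect to be most delicate is the Fourier-inversion bookkeeping together with the phase tracking: one must justify recovering the masses of the integer-valued $\Gamma$ by integrating the characteristic function against $e^{-ik\theta}$ over a single period, and then carefully carry the phase $e^{-i(M-1)\theta/2}$ out of the geometric sum so that it merges with $e^{iq_n(\theta)}$ into the single cosine of the statement. The parity argument that simultaneously halves the interval and discards the imaginary part must be stated explicitly, since it is what pins down both the reality of the expression and the exact constant $1/\pi$ in front of the integral.
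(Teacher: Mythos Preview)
Your proof is correct and follows essentially the same route as the paper: the paper computes the same probability generating function $f(z)=e^{-\sum_n\mu_n}\exp\bigl(\sum_n\mu_n z^n\bigr)$, recovers the point masses via Cauchy's integral formula on the unit circle (which, after the substitution $z=e^{i\theta}$, is exactly your Fourier-inversion step), sums the same geometric series $\sum_{k=0}^{M-1}e^{-ik\theta}$, and then simplifies by parity. The only cosmetic differences are that the paper integrates over $[0,2\pi]$ before halving, and that it records the normalisation as $\sum_n\mu_n=\delta\alpha_N$ (hence the prefactor $e^{-\delta\alpha_N}$ in the statement), whereas your telescoping from $\mu_n=\alpha_n-\alpha_{n-1}$ gives $\sum_n\mu_n=\alpha_N$; this is a notational inconsistency internal to the paper, not an error in your argument.
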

    
    \begin{proof}
    See appendix \ref{proof1}.	
    \end{proof}

    Proposition \ref{theorem1} is valid not only for the considered Cox process but also for every process of user distribution, in particular spatial PPP model for which the parameters $\mu_n$ in (\ref{mun}) are adapted to $\mu_n= u (d_n^2 - d_{n-1}^2 )/R^2$. Also, this formula can be applied to every stochastic process that describes the random tessellation of roads (e.g., PVT...) once we have the  number of roads that lie inside the cell coverage area, their lengths and their distances from the serving BS.\\
    
    The final expression of the congestion probability can be derived by averaging over the PLP as follows:
    \begin{equation}
    \Pi(M,\tau) = \mathbb{E}_\phi[\mathbb{P}(\Gamma \geq M|\phi)].
    \end{equation}
    
    Once we have the expression of the congestion probability, we set a target congestion probability $\Pi^*$. The required number of PRBs $M$ is written as a function of $\tau$ through the implicit equation $\Pi(M,\tau)=\Pi^*$. The output $M$ of the implicit function constitutes the result of the dimensioning problem.
     
    \section{Numerical results}
    For numerical purpose, we consider a cell of radius $R=0.6km$ with power level $P=60dBm$ and operating in a bandwidth of $20MHz$. The downlink thermal noise power including the receiver noise figure is calculated for $20MHz$ such that $\sigma^2=-93dBm$. Since only users on roads are concerned, outdoor environment with propagation parameter $a=130dB$ and path loss exponent $2b=3.5$ are considered. We assume also that we have 8TX antennas in the BS and 2RX antennas in users' terminals. So, the number of possible transmission layers is at most 2. The SINR threshold is set to $\Theta^*=-10dB$.\\ 
    

    In Fig. \ref{ths}, we simulate the described Cox process driven by PLP and the spatial PPP in MATLAB for two values of cell throughput $\tau=8Mbps$ and $\tau=25Mbps$. We notice that the explicit expression of the congestion probability fits the empirical one obtained by using Monte-Carlo simulations. Moreover, it is obvious that an increase in cell throughput $\tau$ generates an increase in the congestion probability because $\tau$ is related to the number of users in the cell. When the intensity of users in roads $\delta$ increases or the intensity of roads $\lambda$ increases, the number of required PRBs on roads that lie inside the cell coverage area increases, thus the system experiences a high congestion. An other important factor that impacts system performances is the path loss exponent. The variations of this parameter has tremendous effect on the congestion probability: when $2b$ goes up, radio conditions become worse and consequently the number of demanded PRBs, to guarantee the required QoS, increases.\\

    Also, Fig. \ref{ths} shows the congestion probability obtained with the described Cox model and its comparison with the one of spatial PPP model for two values of the average cell throughput $\tau$. We observe that the number of requested PRBs by users is always higher, for every target value of the congestion probability, when the random tessellation of roads is taken in consideration. In other words, even if the mean number of users in the cell is the same, the geometry of the area covered by this cell has a significant impact on system performances. Furthermore, if we consider a Cox model with high roads intensity, users appear to be distributed every where in the cell as in spatial PPP model. In this case, the Cox model driven by the PLP can be approximated by a spatial PPP with the same intensity.\\

     \begin{figure}[tb]
     	\centering
     	\includegraphics[height=5.5cm,width=9cm]{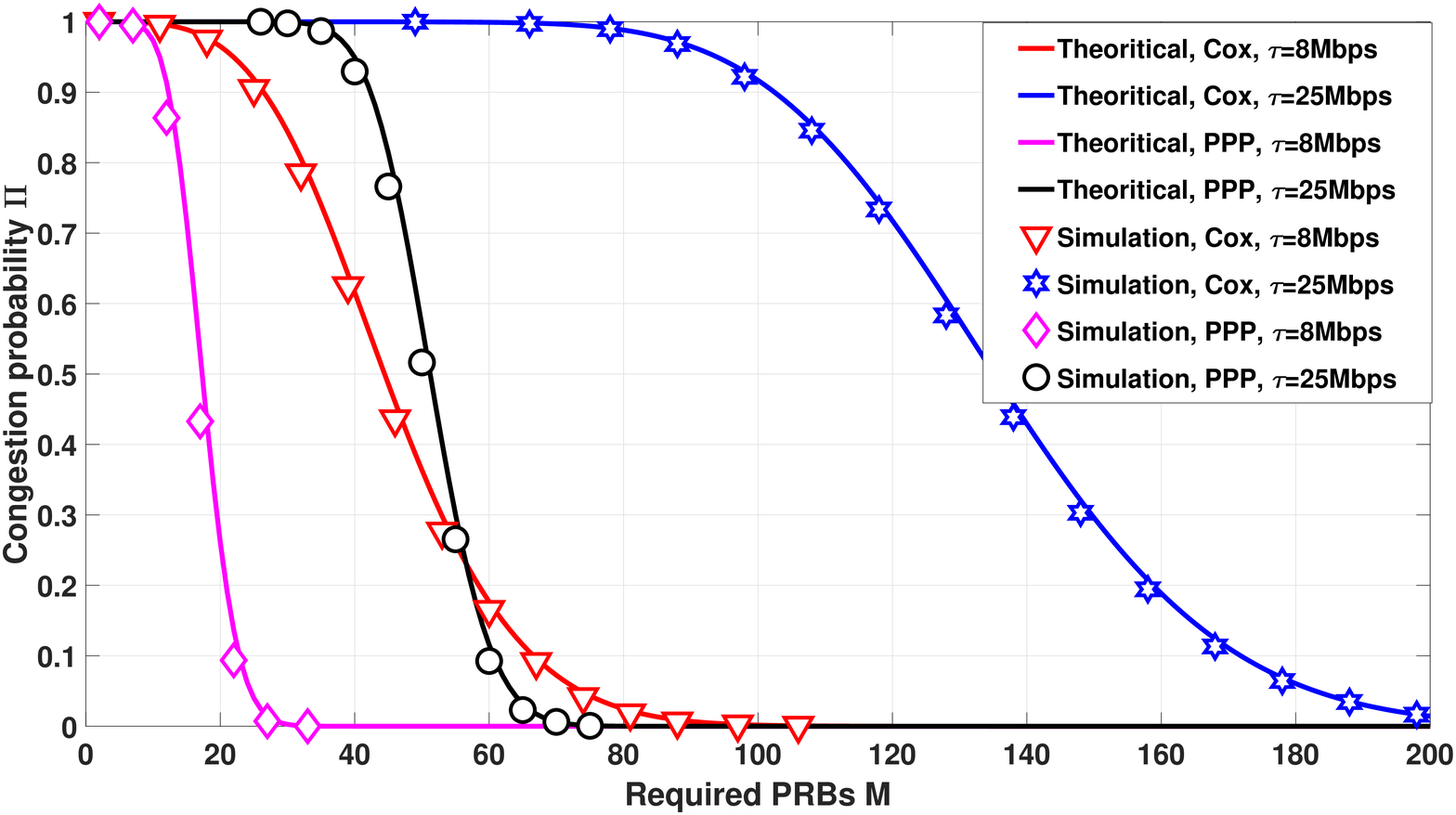}
     	\caption{Congestion probability for different values of $\tau$: comparison between Cox model and spatial PPP model}
     	\label{ths}
     \end{figure}

    During resource dimensioning process, the operator starts by defining a target congestion probability that can be tolerated for a given service. For different traffic forecasts, the number of PRBs is set to ensure that the congestion probability never exceeds its target. Fig. \ref{dimensioning1} shows the number of required PRBs that the operator should make it available when the expected cell throughput is known for two target values of congestion probability ($\Pi^*=1\%$ and $\Pi^*=5\%$) and for two road intensities ($\lambda=5km/km^2$ and $\lambda=15km/km^2$). Once again, we conclude that the number of dimensioned PRBs is very sensitive to the road intensity $\lambda$ and this is in agreement with results of Fig. \ref{ths}. Indeed, for a target congestion probability $\Pi^*=1\%$, when $\lambda$ increases from $5km/km^2$ to $15km/km^2$ (i.e., from 10 expected roads to 30) the number of dimensioned PRBs decreases by $23$, for the same cell throughput value $\tau=25 Mbps$. For a given value of $\tau$, we can notice from (\ref{cellth}) that the user intensity on roads $\delta$ is inversely proportional to roads' intensity $\lambda$. Thus for fixed $\tau$, if $\lambda$ increases $\delta$ decreases and consequently the number of required PRBs decreases. Besides, when $\lambda$ is very high, the distribution of users becomes similar to a spatial PPP. Thus, with spatial PPP model, one can have small values of dimensioned PRBs, which is considered optimistic compared to the real geometry of roads where more PRBs are required to guarantee the desired QoS.\\

     To see interference impact on dimensioning process, we divide the cell into 3 regions: cell center with a radius of R/3, cell middle represented by the ring between R/3 and 2R/3 and cell edge characterized by a distance from the BS that exceeds 2R/3. Each region of the cell experiences a given level of interference evaluated in terms of IM (Interference Margin or Noise Rise). Cell edge users always experience high interference level and IM is set to $15dB$. In cell middle we consider an interference margin of $IM=8dB$, whereas in the cell center where users perceive good radio conditions, the interference margin is set to $IM=1dB$.\\

    \begin{figure}[tb]
    	
    	\centering
    	
    	\includegraphics[height=5.5cm,width=9cm]{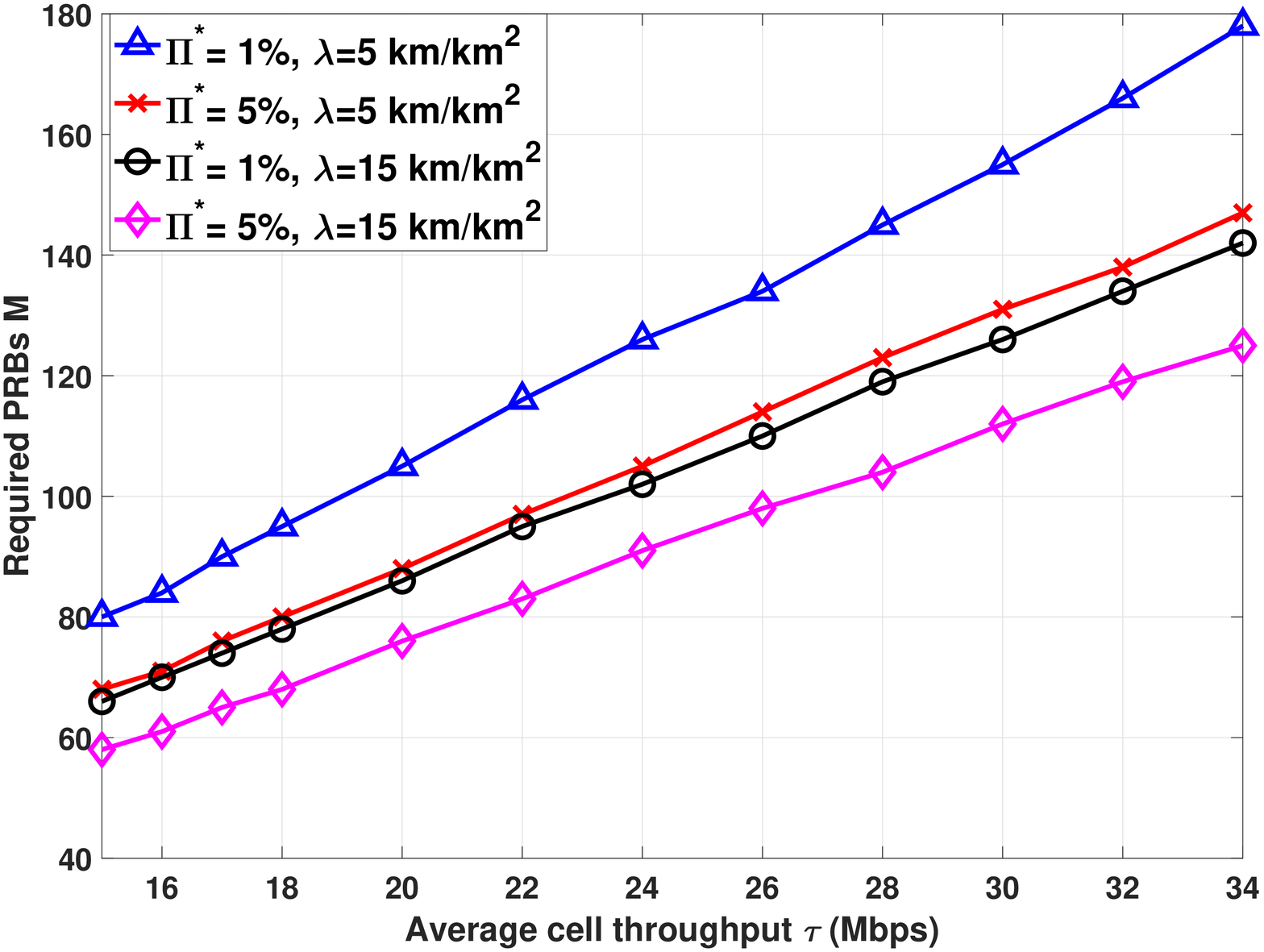}
    	\caption{Dimensioned PRBs $M$ as a function of cell Throughput $\tau$, for fixed transmission rate $C^*=1 Mbps$.}
    	\label{dimensioning1}
    \end{figure}
     
     Fig. \ref{interference1} shows the congestion probability in a noise-limited scenario (Interference level is neglected) and its comparison with the one where interference is taken in consideration as we have described above. As expected, interference has a tremendous impact on the number of required PRBs. For instance, when the target congestion probability is set to $5\%$, the number of required PRBs increases by almost 60 because of the presence of interference. Actually, interference level varies from one location to another in the same cell. Practically cell edge users experience high interference level compared to users that are close to the BS in the cell middle or cell center. Fig. \ref{interference2} shows a comparison between resources dimensioning results for the three regions of the cell: cell center, cell middle and cell edge. As we can observe, the high demand on PRBs comes especially from cell edge users that perceive bad radio conditions because of the far distance from the BS and the presence of interference. Hence, for a predicted average cell throughput, the number of dimensioned PRBs should be set considering a probable presence of traffic hotspots at the cell edge.\\

        \begin{figure}[tb]
     	\centering
     	\includegraphics[height=5.5cm,width=9cm]{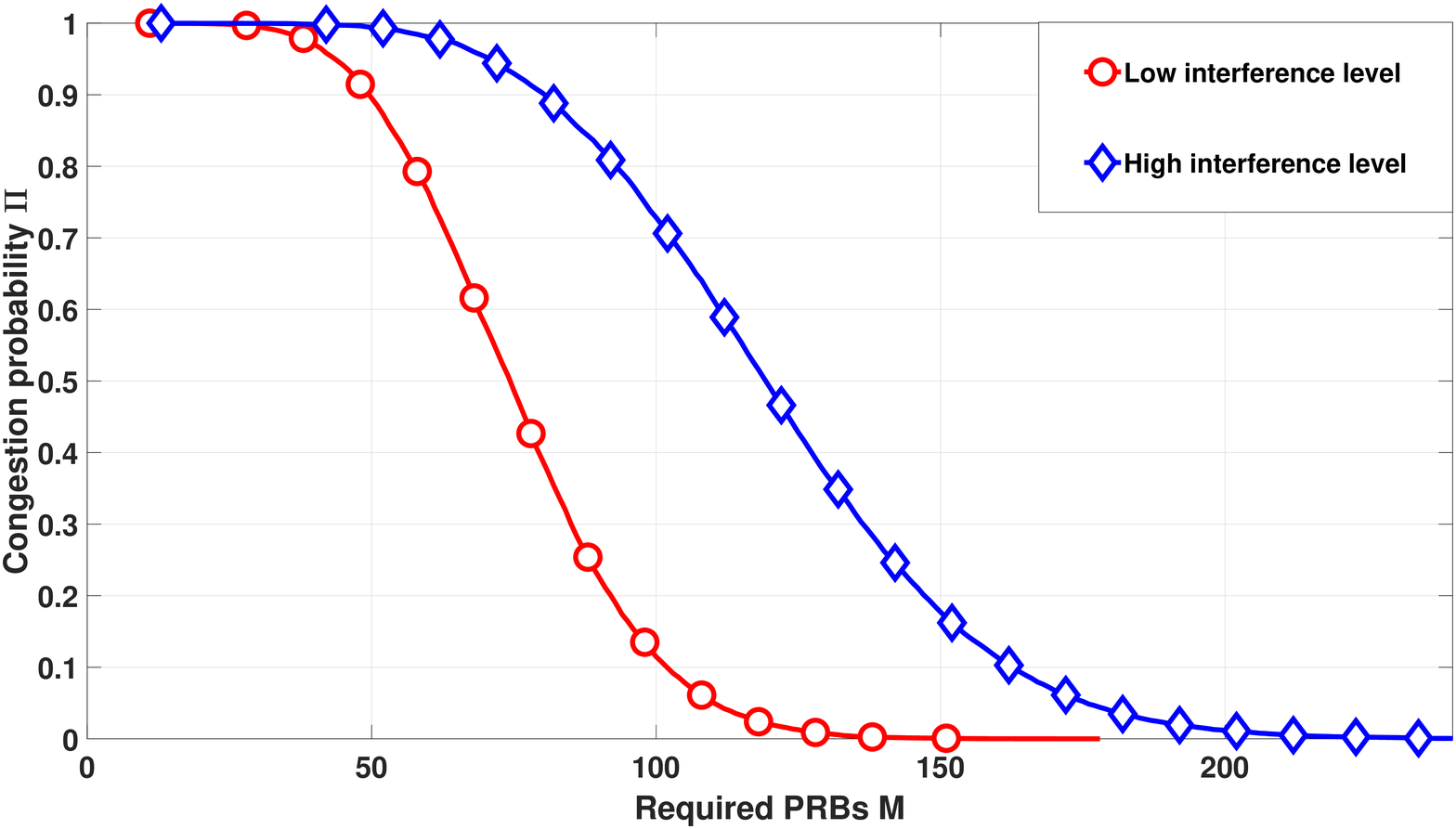}
     	\caption{Interference impact}
     	\label{interference1}
        \end{figure}
     
     \begin{figure}[tb]
     	\centering
     	\includegraphics[height=5.5cm,width=9cm]{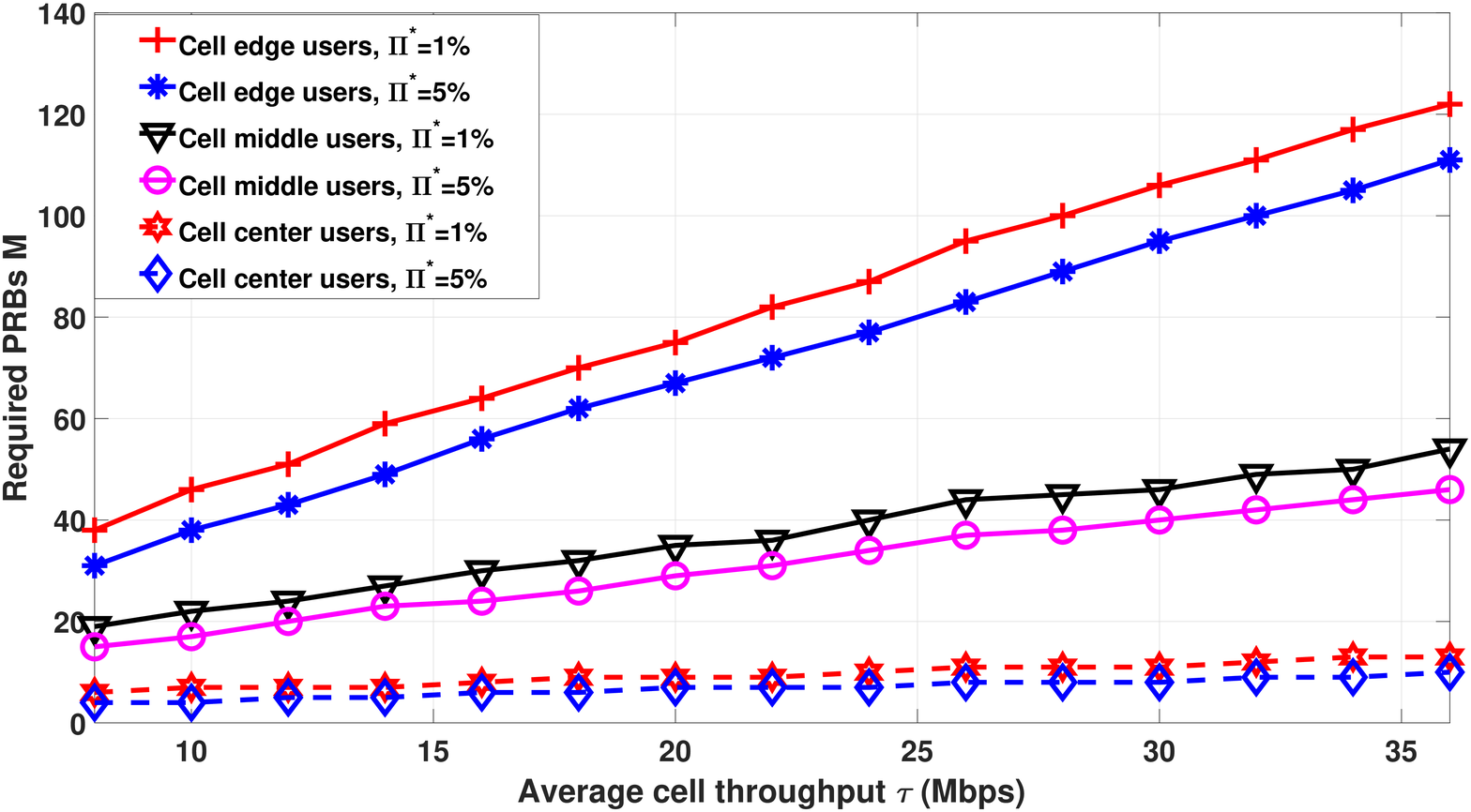}
     	\caption{Dimensioned PRBs $M$ for cell edge, cell center and cell middle users for fixed transmission rate $C^*=800 kbps$.}
     	\label{interference2}
     \end{figure}

    \section{Conclusions}
    In this paper, we have presented a resource dimensioning model for OFDM based systems that can be applied also for scalable OFDM based 5G NR interface. We have modeled the system by doubly stochastic process known as Cox point process driven by a Poisson Line Process that is used to model the random tessellation of roads covered by a typical cell. The comparison between Cox process and the spatial PPP showed that results are more optimistic with PPP. Moreover, we have derived an explicit formula of the congestion probability that can be applied to every stochastic process that describes the random distribution of a system of roads. Also, we have established an implicit relationship between the required resources and the forecast traffic for a given target congestion probability. This relationship translates the dimensioning problem that an operator can perform to look for the amount of required spectrum resources. Finally, interference impact on dimensioned resources has been evaluated especially for cell edge users. Further extension to this approach could include a combination between the spatial distribution of indoor users according to a spatial Poisson Point Process and outdoor users according to the proposed model.

    \appendices
    \section{Proof of proposition \ref{theorem1}}
    \label{proof1}
    
    To prove proposition \ref{theorem1}, first we calculate the moment generating function (i.e., Z-Transform) $f(z)$ of the discrete random variable $\Gamma$ of equation (\ref{prbtot}) conditionally on $\phi$.

   	\begin{align}
   	f(z)=\mathbb{E}(z^{\Gamma}|\phi) = \prod_{n=1}^{N}\sum_{k=0}^{+\infty} z^{nk}\mathbb{P}(X_n=k|\phi). 
    \label{p1} 
   \end{align}
   
	
  Since $X_n$ is a Poisson random variable with parameter $\mu_n$, (\ref{p1}) is simplified to \begin{equation}		
  f(z)=e^{-\delta\alpha_{N}}e^{\sum_{n=1}^{N}z^n \mu_n},
  \label{p3} 
  \end{equation}
 with $\alpha_{N}$ comes from the relation $\sum_{n=1}^{N}\mu_n= \delta \alpha_{N}$.
 
It is obvious that $f$ is analytic on $\mathbb{C}$ and in particular inside the unit circle $\omega$. Cauchy's integral formula gives then the coefficients of the expansion of $f$ in the neighborhood of $z=0$:
\begin{equation}		
\mathbb{P}(\Gamma =k|\phi)= \frac{1}{2 \pi i}\int_{\omega}\frac{f(z)}{z^{k+1}}dz.
\label{cauchy} 
\end{equation}
In (\ref{cauchy}), replacing $f$ by its expression (\ref{p3}) and parameterizing $z$ by $e^{i\theta}$ lead to
\begin{equation}		
 \mathbb{P}(\Gamma =k|\phi)= \frac{1}{2 \pi} e^{-\delta \alpha_{N}} \int_{0}^{2 \pi} \frac{e^{\sum_{n=1}^{N}\mu_n e^{i n \theta}}}{e^{i k \theta}}d\theta.
\label{p5} 
\end{equation}

Since the congestion probability is defined by the CCDF (Complementary Cumulative Distribution Function) of $\Gamma$, then

	\begin{align}
	\mathbb{P}(\Gamma\geq M|\phi) &=1-\frac{1}{2 \pi} e^{-\delta \alpha_{N}} \int_{0}^{2 \pi}e^{\sum_{n=1}^{N}\mu_n e^{i n \theta}} \sum_{k=0}^{M-1} e^{-i k \theta} d\theta.
	\label{p6} 
	\end{align}


%
The sum inside the right hand integral of (\ref{p6}) can be easily calculated to get the explicit expression of (\ref{theee}) after some simplifications.
\bibliographystyle{IEEEtran}
\bibliography{IEEEabrv,TelecomReferences}

\end{document}